\definecolor{fxtarget}{rgb}{0.0000,0.0000,0.4823}
\newtheorem{theorem}{Theorem}
\newtheorem{lemma}{Lemma}
\newtheorem{corollary}{Corollary}
\newtheorem{definition}{Definition}
\newtheorem{problem}{Problem}
\newtheorem{example}{Example}
\renewenvironment{proof}{\begin{trivlist} \item{\textit{Proof.}}}{\end{trivlist}}
\newcommand{\AW}{\mathsf{AW}}
\newcommand{\MAW}{\mathsf{MAW}}
\newcommand{\nonMAW}{\mathsf{non\-MAW}}
\newcommand{\Substr}{\mathsf{Substr}}
\newcommand{\DAWG}{\mathsf{DAWG}}
\newcommand{\EndPos}{\mathsf{End\_Pos}}
\newcommand{\Eqc}[1]{[{#1}]}
\newcommand{\B}{\mathbf{B}}
\newcommand{\Label}{\mathsf{label}}
\newcommand{\Schar}{\mathsf{schar}}
\newcommand{\Char}{\mathsf{char}}
\newcommand{\Succ}{\mathsf{succ}}
\begin{document}

\title{Linear-time computation of \\ generalized minimal absent words for multiple strings}

\author[1]{Kouta~Okabe}
\author[2]{Takuya~Mieno}
\author[3]{Yuto~Nakashima}
\author[3]{Shunsuke~Inenaga}
\author[4]{Hideo~Bannai}

\affil[1]{Department of Information Science and Technology, Kyushu University}

\affil[2]{Department of Computer and Network Engineering, University of Electro-Communications}

\affil[3]{Department of Informatics, Kyushu University}

\affil[4]{M\&D Data Science Center, Tokyo Medical and Dental University}

\date{}
\maketitle

\begin{abstract}
A string $w$ is called a \emph{minimal absent word} (\emph{MAW}) for a string $S$ if $w$ does not occur as a substring in $S$ and all proper substrings of $w$ occur in $S$.
MAWs are well-studied combinatorial string objects that have potential applications in areas including bioinformatics, musicology, and data compression.
In this paper, we generalize the notion of MAWs to a set $\mathcal{S} = \{S_1, \ldots, S_k\}$ of multiple strings.
We first describe our solution to the case of $k = 2$ strings,
and show how to compute the set $\mathsf{M}$ of MAWs in optimal $O(n + |\mathsf{M}|)$ time and with $O(n)$ working space, where $n$ denotes the total length of the strings in $\mathcal{S}$.
We then move on to the general case of $k > 2$ strings,
and show how to compute the set $\mathsf{M}$ of MAWs in $O(n \lceil k / \log n \rceil + |\mathsf{M}|)$ time and with $O(n (k + \log n))$ bits of working space,
in the word RAM model with machine word size $\omega = \log n$.
The latter algorithm runs in optimal $O(n + |\mathsf{M}|)$ time for $k = O(\log n)$.
\end{abstract}

\section{Introduction}

A non-empty string $w$ is said to be an \emph{absent word} (a.k.a. \emph{a forbidden word}) for a string $S$ if $w$ is \emph{not} a substring of $S$.
An absent word $w$ for $S$ is said to be a \emph{minimal absent word} (\emph{MAW})
for $S$ if all proper substrings of $w$ occur in $S$.
For instance, for string $S = \mathtt{bbacccbaa}$
over an alphabet $\Sigma = \{\mathtt{a}, \mathtt{b}, \mathtt{c}, \mathtt{d}\}$,
the set $\MAW(S)$ of all MAWs for $S$ is 
$\{\mathtt{aaa},\mathtt{bbb},\mathtt{cccc},\mathtt{d}, \mathtt{ab},\mathtt{ca},\mathtt{bc}, \mathtt{aac},\mathtt{acb},\mathtt{cbb},\mathtt{accb},$
$\mathtt{cbac}, \mathtt{bbaa}\}$.
\tmnote*{fixed example}{}%

MAWs are combinatorial string objects, and their interesting
mathematical properties have extensively been studied in the literature (see \cite{Beal1996MAWandSymbolicDynamics,Crochemore1998MAWdefinition,Fici2006MAWapplication,CrochemoreHKMPR20,MienoKAFNIBT20,DBLP:journals/tcs/AkagiKMNIBT22} and references therein).
MAWs also enjoy several applications including
phylogeny~\cite{Chairungsee2012PhylogenyByMAW},
data compression~\cite{Crochemore2000DCA,crochemore2002improved,AyadBFHP21},
musical information retrieval~\cite{CrawfordB018},
and bioinformatics~\cite{Almirantis2017MolecularBiology,Charalampapaulos2018Alignment-free,pratas2020persistent,koulouras2021significant}.

It is known that the number $|\MAW(S)|$ of MAWs for a string $S$ of length $n$
over an alphabet of size $\sigma$ is $O(\sigma n)$ and that this bound is tight~\cite{Crochemore1998MAWdefinition}.
Crochremore et al.~\cite{Crochemore1998MAWdefinition} gave an algorithm that computes
$\MAW(S)$ in $O(\sigma n)$ time with $O(n)$ working space.
Fujishige et al.~\cite{Fujishige2016DAWG} showed an improved algorithm for computing $\MAW(S)$ in optimal $O(n + |\MAW(S)|)$ time with $O(n)$ working space, for an input string $S$ of length $n$ over an integer alphabet of polynomial size in $n$.
Both of the two aforementioned algorithms utilize an $O(n)$-size string data structure called the \emph{(directed acyclic word graph)} \emph{DAWG}~\cite{BlumerBHECS85},
which recognizes the set of substrings of $S$,
and can be built in $O(n \log \sigma)$ time for general ordered alphabets~\cite{BlumerBHECS85},
and in $O(n)$ time for integer alphabets of polynomial size in $n$~\cite{Fujishige2016DAWG}.
There also exist other efficient algorithms for computing MAWs
with other string data structures such as suffix arrays and Burrows-Wheeler transforms~\cite{Belazzougui2013ESA,Barton2014MAWbySA}.

The aim of this paper is to extend the notion of MAWs to a set $\mathcal{S} = \{S_1, \ldots, S_k\}$ of multiple $k$ strings.
We are aware of a few related attempts in earlier work:
Chairungsee and Crochemore~\cite{Chairungsee2012PhylogenyByMAW} introduced a string similarity measure based on the symmetric difference $\MAW(S_1) \bigtriangleup \MAW(S_2)$ of the sets of MAWs for two strings $S_1$ and $S_2$ to compare.
They introduced a length threshold $\ell \geq 1$,
and described an approach for computing $\left( \MAW(S_1) \bigtriangleup \MAW(S_2) \right) \cap \Sigma^\ell$ with the two following steps:
First, the tries of size $O(n \ell)$ each representing the substrings of $S_1$ and $S_2$ of length up to $\ell$ are built, where $n = |S_1| + |S_2|$.
Then, two tries each representing $\MAW(S_1) \cap \Sigma^\ell$ and $\MAW(S_2) \cap \Sigma^\ell$ are built, which require $O(n \sigma)$ space.
Finally, the length-bounded symmetric difference $\left( \MAW(S_1) \bigtriangleup \MAW(S_2) \right) \cap \Sigma^\ell$
is computed from $\MAW(S_1) \cap \Sigma^\ell$ and $\MAW(S_2) \cap \Sigma^\ell$,
but the authors did not explicitly describe how this computation is done in their method.
Overall, their algorithm requires $\Omega(n (\ell + \sigma))$ time and space~\cite{Chairungsee2012PhylogenyByMAW}\footnote{The claimed time bound for computing the trie is $O(n \sigma)$ (Theorem 1 of~\cite{Chairungsee2012PhylogenyByMAW}). It seems that the authors regarded the length threshold $\ell$ as a constant.}.
Charalampapaulose et al.~\cite{Charalampapaulos2018Alignment-free}
tackled the same problem of computing the symmetric difference
$\MAW(S_1) \bigtriangleup \MAW(S_1)$ (without length threshold $\ell$),
and proposed a solution that requires $O(\sigma n)$ time and space.
Their method firstly computes $\MAW(S_1)$ and $\MAW(S_2)$ separately, and then removes the elements that are in $\MAW(S_1) \cap \MAW(S_2)$.
Charalampopoulos, Crochemore, and Pissis~\cite{charalampopoulos2018extended} presented how to count the number $|\MAW(S_1) \bigtriangleup \MAW(S_2)|$ of elements in the symmetric difference $\MAW(S_1) \bigtriangleup \MAW(S_2)$ in $O(n)$ time in the case of integer alphabets of polynomial size in $n$, by avoiding to list the elements explicitly.

Let $\mathcal{S} = \{S_1, \ldots, S_k\}$ be the input set of $k$ strings,
and $\B \in \{0,1\}^k$ be a given bit vector of length $k$.
Our problem is to list (generalized) MAWs $w$ for $\mathcal{S}$ and $\B$
such that $w \in \MAW(S_i)$
for every $\B[i] = 1$, and $w \notin \MAW(S_i)$ for every $\B[i] = 0$.
For $k = 2$, 
the aforementioned problem of computing $\MAW(S_1) \bigtriangleup \MAW(S_2)$
is equivalent to solving our problem for $\B = 01$ and $\B = 10$.
In Section~\ref{sec:overview} and Section~\ref{sec:skip_link},
we deal with the case with $k = 2$,
and present an algorithm running in $O(n+ |\mathsf{M}_{\B}|)$ time with $O(n)$ working space,
where $\mathsf{M}_\B$ denotes the set of (generalized) MAWs to output for a given bit vector $\B$ (Theorem~\ref{theo:two_strings}).
This immediately gives us an algorithm for listing the elements of
the symmetric difference $\MAW(S_1) \bigtriangleup \MAW(S_2)$
in optimal $O(n + |\MAW(S_1) \bigtriangleup \MAW(S_2)|)$ time (Corollary~\ref{coro:two_strings}).
In Section~\ref{sec:general}, we deal with the general case of $k > 2$,
and extend our solution for $k = 2$ to the general case.
Let $n$ be the total length of the input $k$ strings in $\mathcal{S}$.
Our solution for general $k > 2$ works in $O(n \lceil k / \log n \rceil + |\mathsf{M}_{\B}|)$ time
with $O(n(k+\log n))$ \emph{bits} of working space
on the word RAM model with machine word size $\omega = \log n$.
Thus, for $k = O(\log n)$,
our algorithm runs in optimal $O(n + |\mathsf{M}_{\B}|)$ time.
All the bounds claimed in this paper are valid for linearly sortable alphabets, including
integer alphabets of polynomial size in $n$.

As in the previous work~\cite{Crochemore1998MAWdefinition,Fujishige2016DAWG,Fujishige2023DAWG_MAW},
our key data structure is the DAWG for the input set $\mathcal{S}$ of strings.
The best-known algorithm for constructing the DAWG for a set of strings
of total length $n$ takes $O(n \log \sigma)$ time~\cite{Blumer1987},
thus it can require $O(n \log n)$ time for large alphabets.
We describe how the DAWG for a given set $\mathcal{S}$ of strings
over an integer alphabet of polynomial size in $n$
can be obtained in optimal $O(n)$ time (Theorem~\ref{theo:dawg_construction}), 
which may be of independent interest.

\section{Preliminaries}

\subsection{Strings}
Let $\Sigma$ be an ordered alphabet.
An element of $\Sigma$ is called a character.
For characters $a,b \in \Sigma$, we write
$a \prec b$ (or equivalently $b \succ a$) if $a$ is lexicographically smaller than $b$.
An element of $\Sigma^\ast$ is called a string.
The length of a string $S$ is denoted by $|S|$.
The empty string $\varepsilon$ is the string of length 0.
If $S = xyz$, then $x$, $y$, and $z$ are called
a \emph{prefix}, \emph{substring}, and \emph{suffix} of $S$, respectively.
They are called a \emph{proper prefix}, \emph{proper substring},
and \emph{proper suffix} of $S$ if $x \neq S$, $y \neq S$, and $z \neq S$,
respectively.
Let $\Substr(S)$ denote the set of substrings of string $S$.
For any $1 \leq i \leq |S|$, the $i$-th character of $S$ is denoted by $S[i]$.
For any $1 \leq i \leq j \leq |S|$, $S[i..j]$ denotes
the substring of $S$ starting at $i$ and ending at $j$.
For convenience, let $S[i..j] = \varepsilon$ for $0 \leq j < i \leq |S|+1$.
We say that a string $w$ \emph{occurs} in a string $S$
iff $w$ is a substring of $S$.
Note that by definition the empty string $\varepsilon$
is a substring of any string $S$
and hence $\varepsilon$ always occurs in $S$.

For a set $\mathcal{S}$ of strings,
let $\Vert \mathcal{S} \Vert$ denote the total length of the strings in $\mathcal{S}$,
that is, $\Vert \mathcal{S} \Vert = \sum_{S \in \mathcal{S}}|S|$.
Let $\Substr(\mathcal{S})$ denote the set of substrings of the strings in $\mathcal{S}$, that is, $\Substr(\mathcal{S}) = \left( \bigcup_{S \in \mathcal{S}}\{S[i..j] \mid 1 \leq i \leq j \leq |S|\} \right) \cup \{\varepsilon\}$.


%

\subsection{Minimal absent words (MAWs)}

A string $w$ is called an \emph{absent word} for a string $S$
if $w$ does not occur in $S$.
Let $\AW(S) = \Sigma^* \setminus \Substr(S)$ denote
the set of absent words for a string $S$.
An absent word $w \in \AW(S)$ for string $S$ is called a \emph{minimal absent word} or \emph{MAW} for $S$ if any proper substring of $w$ occurs in $S$.
We denote by $\MAW(S)$ the set of all MAWs for $S$.
\tmnote*{removed parentheses}{%
Let $\nonMAW(S) = \AW(S) \setminus \MAW(S)$ be the set of absent words for $S$ which are not MAWs.
}%
Note that, for strings $w$ and $S$,
it holds that $w \notin \MAW(S)$ iff $w \in \Substr(S) \cup \nonMAW(S)$.
%

We extend the aforementioned notion of MAWs to a set
$\mathcal{S} = \{S_1, \ldots, S_k\}$ of $k$ strings for $k \geq 1$, as follows:
Let $\B$ be a bit-vector of length $k$,
and let $\mathcal{S}_{\B}$ be a subset of $\mathcal{S}$
such that $\mathcal{S}_{\B} = \{S_i \mid \B[i] = 1\}$.
Let $\overline{\mathcal{S}_{\B}} = \{S_i \mid \B[i] = 0\} = \mathcal{S} \setminus \mathcal{S}_{\B}$.
A string $w$ is said to be a MAW for $\mathcal{S}_{\B}$ if
(1) $w \in \bigcap_{S_i \in \mathcal{S}_{\B}}\MAW(S_i)$ and
(2) $w \notin \bigcup_{S_i \in \overline{\mathcal{S}_{\B}}}\MAW(S_i)$.
Condition (1) implies that $w$ is a MAW for any string in $\mathcal{S}_{\B}$.
Condition (2) implies that $w$ is \emph{not} a MAW for any string in $\overline{\mathcal{S}_{\B}}$,
\sinote*{modified}{%
  which is equivalent to say that $w \in \bigcap_{S_i \in \overline{\mathcal{S}_{\B}}}(\Substr(S_i) \cup \nonMAW(S_i))$.
}%
Let $\MAW(\mathcal{S}_{\B})$ denote the set of all MAWs for $\mathcal{S}_{\B}$.

\tmnote*{fixed example}{}%
\begin{example}
  For string set $\mathcal{S} = \{\mathtt{abaab}, \mathtt{aacbba}\}$ over the alphabet $\Sigma = \{\mathtt{a, b, c, d}\}$,
  $\MAW(\mathcal{S}_{10}) = \{\mathtt{aaba}, \mathtt{bab}, \mathtt{bb}, \mathtt{c}\}$,
  $\MAW(\mathcal{S}_{01}) = \{\mathtt{ab}, \mathtt{baa}, \mathtt{bac}, \mathtt{bbb}, \mathtt{bc}, \mathtt{ca}, \mathtt{cba}, \mathtt{cc}\}$, and
  $\MAW(\mathcal{S}_{11}) = \{\mathtt{aaa, d}\}$.
\end{example}

The problem we consider in this paper is the following:

\begin{problem}[MAWs for multiple input strings]
Given a set $\mathcal{S} = \{S_1, \dots, S_k\}$ of $k$ strings
over an alphabet $\Sigma$ and a bit vector $\B$ of length $k$,
compute $\MAW(\mathcal{S}_{\B})$.
\end{problem}



\section{The DAWG data structure}

We use the \emph{directed acyclic word graph} (\emph{DAWG})~\cite{BlumerBHECS85}
data structure for a set $\mathcal{S} = \{S_1, \ldots, S_k\}$ of $k$ strings,
which is a DFA of size $O(\Vert \mathcal{S} \Vert)$
that recognizes all suffixes of the strings in $\mathcal{S}$.

To give a formal definition of $\DAWG(\mathcal{S})$,
let $\EndPos_{\mathcal{S}}(w)$ denote the set of ending positions
of all occurrences of a string $w$ in the strings of $\mathcal{S}$,
that is,
\[\EndPos_{\mathcal{S}}(w) = \{(i, j) \mid S_i[j-|w|+1..j] = w, 1 \leq i \leq k, 1 \leq j \leq |S_i|\}.
\]
We consider an equivalence relation $\equiv_{\mathcal{S}}$ of strings over $\Sigma$ w.r.t. $\mathcal{S}$ such that, for any two strings $w$ and $u$,
$w \equiv_{\mathcal{S}} u$ iff $\EndPos_{\mathcal{S}}(w) = \EndPos_{\mathcal{S}}(u)$.
For any string $x \in \Sigma^*$,
let $[x]_{\mathcal{S}}$ denote the equivalence class for $x$ w.r.t. $\equiv_{\mathcal{S}}$.
All the non-substrings $x \notin \Substr(\mathcal{S})$ form
a unique equivalence class, called the \emph{degenerate} class.

\begin{definition}
  The DAWG of a set $\mathcal{S}$ of strings, denoted $\DAWG(\mathcal{S})$,
  is an edge-labeled DAG $(V, E)$ such that
  \begin{eqnarray*}
    V & = & \{\Eqc{x}_{\mathcal{S}} \mid x \in \Substr(\mathcal{S})\}, \\
    E & = & \{(\Eqc{x}_{\mathcal{S}}, b, \Eqc{xb}_{\mathcal{S}}) \mid x, xb \in \Substr(\mathcal{S}), b \in \Sigma\}.
  \end{eqnarray*}
  We also define the set $L$ of {\em suffix links} of $\DAWG(\mathcal{S})$
  by
  \[
    L = \{(\Eqc{ax}_{\mathcal{S}}, a, \Eqc{x}_{\mathcal{S}}) \mid x, ax \in \Substr(\mathcal{S}), a \in \Sigma, \Eqc{ax}_{\mathcal{S}} \neq \Eqc{x}_{\mathcal{S}} \}.
  \]
\end{definition}
Namely, two substrings $x$ and $y$ in $\Substr(\mathcal{S})$ are
represented by the same node of $\DAWG(\mathcal{S})$
iff the ending positions of $x$ and $y$ in the strings of $\mathcal{S}$ are equal.
Note that $\DAWG(\mathcal{S})$ does not contain the node for the degenerate class
nor its in-coming edges.
This is important for $\DAWG(\mathcal{S})$ to have a total linear number
of edges~\cite{BlumerBHECS85},
and for our linear-time algorithm for listing all the MAWs for a given query.

For convenience,
assume that each string $S_i$ in $\mathcal{S}  = \{S_1, \ldots, S_k\}$
terminates with a unique end-marker $\#_i$
which does not occur elsewhere, where $\#_i \neq \#_j$ for $i \neq j$.
Then $\DAWG(\mathcal{S})$ has exactly $k$ sink nodes,
each of which recognizes all the non-empty suffixes of $S_i$.
For each $1 \leq i \leq k$,
the sink that recognizes the suffixes of $S_i$ is labeled by $i$.

The DAWG for a single string $T$ is the DAWG for a singleton $\{T\}$
and is denoted by $\DAWG(T)$.

The state-of-the-art algorithm that builds
$\DAWG(\mathcal{S})$ is Blumer et al.'s online algorithm~\cite{BlumerBHECS85}
which runs in $O(n \log \sigma)$ time with $O(n)$ space, where $n = \Vert \mathcal{S} \Vert$ is the total length of the strings in $\mathcal{S}$ and $\sigma$ is the alphabet size.
Below we describe a faster construction of $\DAWG(\mathcal{S})$ in the case of integer alphabets:
\begin{theorem}[Linear-time DAWG construction for a set of strings]
  \label{theo:dawg_construction}
  For a given set $\mathcal{S} = \{S_1, \ldots, S_k\}$ of $k$ strings of total length $n$ over an integer alphabet $\Sigma$ of polynomial size in $n$,
  one can build the edge-sorted $\DAWG(\mathcal{S})$ in $O(n)$ time and space.
\end{theorem}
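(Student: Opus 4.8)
The plan is to reduce the construction of $\DAWG(\mathcal{S})$ to building a generalized suffix tree of the \emph{reversed} strings and then to exploit the well-known duality between the DAWG of a string and the suffix tree of its reverse, staying within the linear-time regime that is available for integer alphabets.

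First I would reverse every input string, form $\mathcal{S}^R = \{S_1^R, \ldots, S_k^R\}$, and build the generalized suffix tree $\mathcal{T}$ of $\mathcal{S}^R$. Over an integer alphabet of polynomial size this can be done in $O(n)$ time and space: concatenate the reversed strings separated by the $k$ distinct end-markers $\#_1, \ldots, \#_k$ (which keeps the alphabet polynomially bounded), build the suffix tree of the concatenation via Farach's linear-time algorithm, and prune the parts that cross end-markers to obtain the generalized suffix tree of $\mathcal{S}^R$.

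Next I would set up the node and edge correspondence. The key observation is that the end-position set $\EndPos_{\mathcal{S}}(w)$ of a substring $w$ equals, after the obvious index transform $j \mapsto |S_i|-j+1$, the set of \emph{starting} positions of $w^R$ in $\mathcal{S}^R$; hence the equivalence classes $\Eqc{\cdot}_{\mathcal{S}}$ that define the nodes of $\DAWG(\mathcal{S})$ are in one-to-one correspondence with the explicit nodes of $\mathcal{T}$ via $w \mapsto w^R$. Under this bijection the suffix links $L$ of $\DAWG(\mathcal{S})$ are exactly the labeled tree edges of $\mathcal{T}$ read in the reverse direction, so $L$ can be read off directly. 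The forward edges $E$ of $\DAWG(\mathcal{S})$, namely the transitions $(\Eqc{x}, b, \Eqc{xb})$, correspond to the \emph{Weiner links} of $\mathcal{T}$ (prepending $b$ to $x^R$).

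The crucial enabling fact is that $\DAWG(\mathcal{S})$ has only $O(n)$ edges in total: the linear bound on the number of DAWG edges holds independently of the alphabet size and, for a set of strings, depends only on the total length $n$. Consequently the number of Weiner links of $\mathcal{T}$ that must be materialized is $O(n)$. The main obstacle is then to compute all of them in $O(n)$ total time rather than the naive $O(\sigma n)$: the primary forward edges (those for which the length of the longest represented string increases by exactly one) are obtained directly from the tree edges of $\mathcal{T}$ and their leading characters, whereas the secondary forward edges must be propagated downward through $\mathcal{T}$, and the charging argument that bounds the total propagation cost by the number of DAWG edges, hence by $O(n)$, is the delicate point. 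Finally, to obtain the \emph{edge-sorted} DAWG I would radix-sort the $O(n)$ resulting edges by their character labels in $O(n)$ time, which is valid since the alphabet is linearly sortable; collecting the nodes, suffix links, forward edges, and sorted adjacency lists yields the edge-sorted $\DAWG(\mathcal{S})$ in $O(n)$ time and space.
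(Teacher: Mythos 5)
Your route---build the generalized suffix tree $\mathcal{T}$ of the reversed strings, then read $\DAWG(\mathcal{S})$ off via the duality (DAWG nodes $=$ explicit nodes of $\mathcal{T}$, DAWG suffix links $=$ reversed tree edges, DAWG transitions $=$ Weiner links)---is viable in principle and genuinely different from the paper's proof. But the proposal stops exactly where the work is: you yourself flag the computation of the secondary transitions (soft Weiner links) in $O(n)$ total time as ``the delicate point,'' and you never supply the propagation rule or the charging argument, so as written this is a plan with a hole rather than a proof. Worse, the one concrete thing you say about that step is backwards: secondary links cannot be ``propagated downward through $\mathcal{T}$.'' A node $v$ has a Weiner link on character $a$ iff $a\cdot\mathrm{str}(v)$ occurs, and since $\mathrm{str}(v)$ is a prefix of $\mathrm{str}(u)$ for every child $u$ of $v$, each child's set of link characters is a \emph{subset} of its parent's; which characters survive into a subtree is determined by the leaves below it, so no local top-down rule can decide this. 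The correct procedure is bottom-up: hard Weiner links are the inverted suffix links of $\mathcal{T}$, and for each hard link $(d,a,w)$ (i.e., $\mathrm{str}(w)=a\cdot\mathrm{str}(d)$ with $w$ explicit) one walks from $d$ upward through proper ancestors $v$ as long as $|\mathrm{str}(v)|\geq|\mathrm{str}(\mathrm{parent}(w))|$, emitting the soft link $(v,a,w)$ at each step; every step emits one distinct secondary DAWG edge, which is precisely the amortization you allude to but do not prove. You would also need to verify the multi-string subtleties: every suffix of every $S_i^R$ must end up as an explicit node of $\mathcal{T}$ (the separator-pruning does guarantee this, after contracting bare-separator leaf edges), and the $O(n)$ edge bound you invoke is the one for the generalized DAWG with the degenerate class removed.

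It is also worth seeing how much lighter the paper's argument is. The paper concatenates $T=S_1\cdots S_k$ (the end-markers $\#_i$ are already present), invokes Fujishige et al.'s linear-time construction of the edge-sorted $\DAWG(T)$ for the \emph{single} string $T$ as a black box, and then uses one structural observation: since each $\#_i$ occurs exactly once in $T$, every substring of $T$ containing a separator internally is represented on the spine (the path spelling out $T$), so $\DAWG(\mathcal{S})$ is obtained by splitting the spine and deleting the nodes that become unreachable, for $i=k,\ldots,2$, in $O(n)$ total time; edge-sortedness is inherited from $\DAWG(T)$, so no final radix sort is even needed. Ironically, Fujishige et al.'s algorithm is itself the single-string instance of the very duality you propose, so your plan amounts to re-deriving, in the harder multi-string setting, the machinery that the paper deliberately reduces to and cites. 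To make your proof complete, supply the bottom-up Weiner-link computation and its charging argument sketched above.
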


\begin{proof}
  We first create a concatenated string $T = S_1 \cdots S_k$
  of total length $n$ from the strings in $\mathcal{S}$.
  We build $\DAWG(T)$ for the single string $T$ in $O(n)$ time and space, using the algorithm of Fujishige et al.~\cite{Fujishige2016DAWG,Fujishige2023DAWG_MAW},
  where the out-going edges of every node are lexicographically sorted.
  Our goal is to convert $G_T = \DAWG(T)$ to $G_{\mathcal{S}} = \DAWG(\mathcal{S})$.
  For a set $P$ of integer pairs and a pair $(a,b)$ of integers,
  let $P \oplus (a,b) = \{(p+a, q+b) \mid (p,q) \in P\}$.
  Our key observation is that, for any substrings $w \in \Substr(\mathcal{S})$
  that \emph{do not} contain separators $\#_i$ except for their last positions, it holds that
  \begin{eqnarray}
    \lefteqn{\EndPos_{\mathcal{S}}(w)} \nonumber \\
    & = & \EndPos_{S_1}(w) \cup \left( \bigcup_{2 \leq i \leq k} \EndPos_{S_i}(w) \oplus (i-1, |S_1 \cdots S_{i-1}|) \right).
    \label{eqn:end_pos}
  \end{eqnarray}
  Equation~(\ref{eqn:end_pos}) implies that the substrings $w$ of $T = S_1 \cdots S_{k}$
  which are also substrings of $\mathcal{S}$ are represented by
  essentially the same nodes in $G_T$ and in $G_{\mathcal{S}}$,
  meaning that there is an injection from the nodes of $G_{\mathcal{S}}$ to the nodes of $G_T$.

  What is left is how to remove the redundant nodes in $G_{T}$
  which represent the substrings $y$ of $T$ containing a separator $\#_i$
  inside, which are thus not substrings of $\mathcal{S}$.
  Let us call the longest path of $G_{T}$ that represents $T$ as the \emph{spine}.
  \sinote*{changed}{%
  Since each $\#_i$ occurs exactly once in $T$,
  any substrings of $T$ that contain $\#_i$ are represented by the spine of $G_T$.
  Thus, 
  we can obtain $G_{\mathcal{S}}$ by removing the redundant nodes from the spine of $G_T$, but we ensure that for every $i$ the suffixes of $S_i$ ending with $\#_i$ are still represented in the graph.
  This can be achieved as follows:
  }%
  We process $i = k, \ldots, 2$ in decreasing order.
  We first split the spine into two parts each spelling out $S_1 \cdots S_{k-1}$ and $S_k$.
  We remove the nodes in the $S_k$ part which are not reachable from the source of the modified graph, together with their out-going edges and suffix links.
  This gives us $\DAWG(\{S_1 \cdots S_{k-1}, S_{k}\})$.
  After processing $i = k$, we continue the same process for $i = k-1$
  with the remaining spine that spells out $S_1 \cdots S_{k-1}$.
  After processing $i = 2$, we obtain $G_{\mathcal{S}} = \DAWG(\mathcal{S})$.
  See Fig.~\ref{fig:DAWG_construction} for an example of our construction.
  It is trivial that all the redundant nodes can be removed in $O(n)$ time.
  \qed
\end{proof}
We remark that the order of concatenating the strings in $\mathcal{S}$ does not affect the correctness nor the complexity of our algorithm.

\begin{figure}[hp!]
  \centerline{
    \includegraphics[scale=0.52]{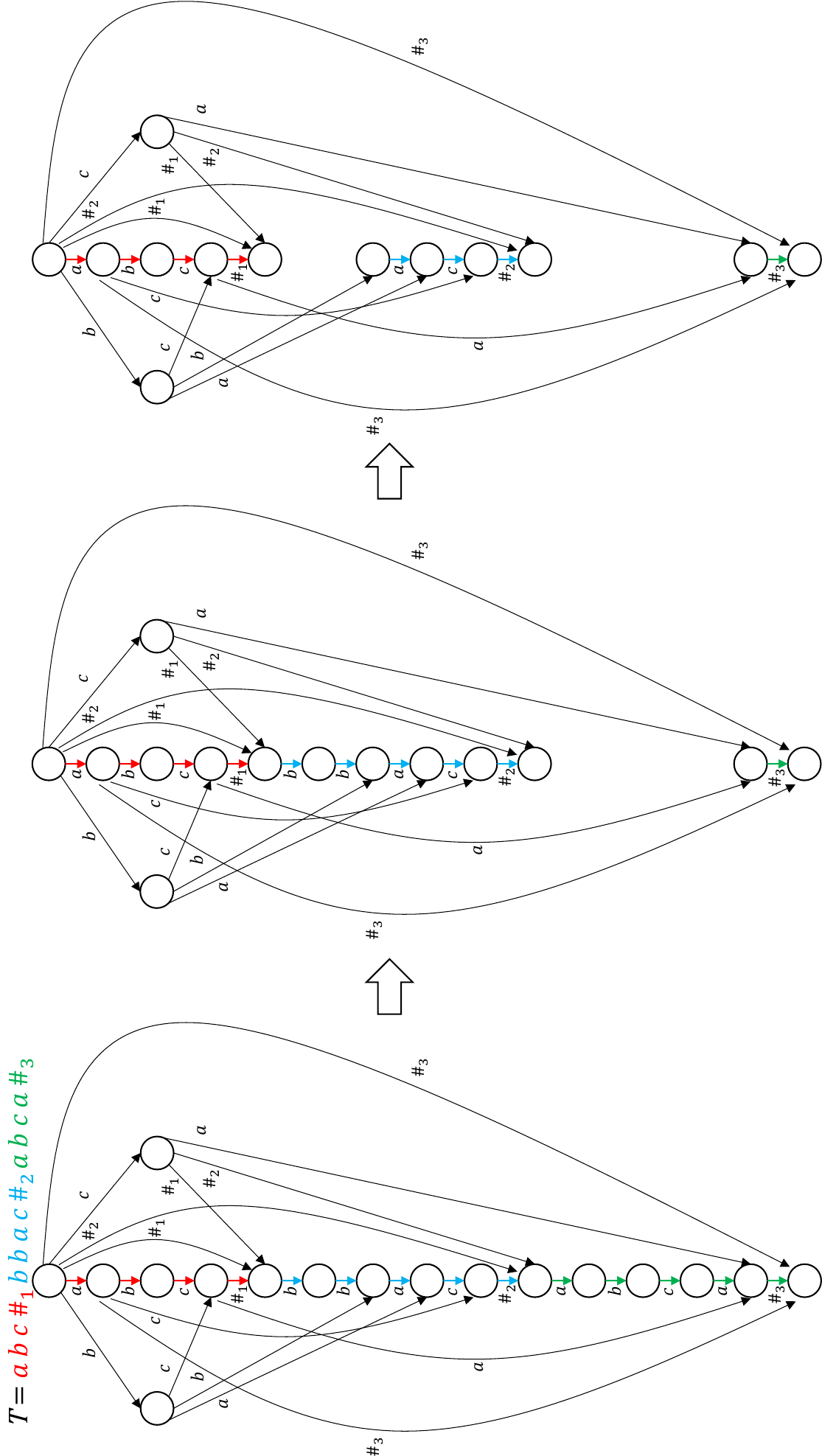}
  }
  \caption{Illustration for our linear-time construction of $\DAWG(\mathcal{S})$ for a set $\mathcal{S} = \{abc\#_1, bbac\#_2, abca\#_3\}$ of strings. We first build $\DAWG(T)$ for the concatenated string $T = abc\#_1bbac\#_2abca\#_3$. Then, we remove the redundant nodes in the spine of the DAWG for $i = 3$ and then for $i = 2$. This gives us $\DAWG(\mathcal{S})$.
  }
  \label{fig:DAWG_construction}
\end{figure}

\section{Algorithm overview for $k = 2$}
\label{sec:overview}

In what follows, we consider the case where our input set $\mathcal{S}$
consists of two strings $S_1$ and $S_2$
which respectively terminate with special characters $\#_1$ and $\#_2$.
We show how, given a bit vector $\B \in \{00, 01, 10, 11\}$ of length $2$,
we can compute $\MAW(\mathcal{S}_{\B})$ in $O(n + |\MAW(\mathcal{S}_{\B})|)$ time
and $O(n)$ working space, where $n = \Vert \mathcal{S} \Vert$.

We first build the edge-sorted $\DAWG(\mathcal{S})$ for a given $\mathcal{S} = \{S_1, S_2\}$ in $O(n)$ time and space with Theorem~\ref{theo:dawg_construction}.
We label each node $v$ of $\DAWG(\mathcal{S})$ by $\#_i$ iff $v$ represents a substring of $S_i$~($1 \leq i \leq 2$).
Let $\Label(v) \in \{\#_1, \#_2, \#_1\#_2\}$ denote the label of node $v$.
The labels of all nodes can be precomputed in $O(n)$ time.

Our algorithm is based on Fujishige et al.'s algorithm~\cite{Fujishige2016DAWG,Fujishige2023DAWG_MAW} for computing all the MAWs in the case of a single input string.
As such, for each node $x$ of $\DAWG(\mathcal{S})$
we focus on the \emph{shortest} string represented by $x$
and denote it by $au$,
where $a \in \Sigma$ and $u \in \Sigma^*$.
We use the suffix link of the node $x$ and its target node $y$
whose \emph{longest} member is $u$ (namely, the first letter $a$ of $au$ is removed by following the suffix link from $x$ to $y$).
For ease of explanation, we identify the node $x$ with the string $au$,
and the node $y$ with the string $u$.

Fujishige et al.'s algorithm compares the out-going edges of $au$
and those of $u$ one by one in the sorted order.
Suppose $au$ has an out-going edge labeled $b$.
If $u$ \emph{does not} have an out-going edge labeled $b$,
then their algorithm outputs $aub$ as a MAW for the input string.
Otherwise, it outputs nothing, and the cost is charged to
the out-going edge of $au$ labeled $b$.
Each MAW $aub$ in the output is encoded by a tuple $(a, i, j)$ such that $w[i..j] = ub$, thus taking $O(1)$ space.
This is how Fujishige et al.'s algorithm works in $O(n + |\MAW(S)|)$ time
and with $O(n)$ working space for a single string $S$.

However, in our case of multiple strings,
depending on the label of nodes $au$, $aub$ and $ub$,
and depending on the value of the given bit vector $\B$,
there may exist some edge comparisons that cannot be charged
either to the output MAWs or to the out-going edges of node $au$.
It is also possible that even if there is a node representing $aub$ in $\DAWG(\mathcal{S})$, still $aub$ is a MAW for some string(s) in $\mathcal{S}$.
To overcome these difficulties, we introduce \emph{skip links} that permit us to avoid unwanted edge character comparisons.

\section{Skip links for $k = 2$}
\label{sec:skip_link}

We use the same conventions for the nodes $au$, $aub$ and $u$ on $\DAWG(\mathcal{S})$ as in the previous section, and also consider the node $ub$.
We have three possible cases for the label of node $au$,
where $\Label(au) = \#_1\#_2$, $\Label(au) = \#_1$, or $\Label(au) = \#_2$.
In each of the three cases, there are some sub-cases for the labels of node $aub$
and node $ub$.
By inspection, we obtain
all the possible cases that need to be considered, as shown in Fig.~\ref{fig:MAW_cases}.

\begin{figure}[tb]
  \centerline{
    \begin{tabular}{|c|c|c|c|c|}
      \cline{3-5}
      \multicolumn{2}{c|}{}   & \multicolumn{3}{c|}{$au$}                                           \\
      \cline{3-5}
      \multicolumn{2}{c|}{}   & $\#_1\#_2$                & $\#_1$                    & $\#_2$      \\
      \hline
      $aub$                   & $ub$                      & \multicolumn{3}{c|}{$\B$}               \\
      \hline
      $\#_1\#_2$              & $\#_1\#_2$                & 00                        & -      & -  \\
      \hline
      \multirow{2}{*}{$\#_1$} & $\#_1$                    & 00                        & 00     & -  \\
                              & $\#_1 \#_2$               & 01                        & 00     & -  \\
      \hline
      \multirow{2}{*}{$\#_2$} & $\#_2$                    & 00                        & -      & 00 \\
                              & $\#_1 \#_2$               & 10                        & -      & 00 \\
      \hline
      \multirow{3}{*}{absent} & $\#_1$                    & 10                        & 10     & 00 \\
                              & $\#_2$                    & 01                        & 00     & 01 \\
                              & $\#_1 \#_2$               & 11                        & 10     & 01 \\
      \hline
    \end{tabular}
    \hfill
    \raisebox{-20.7mm}{
      \includegraphics[scale=0.44]{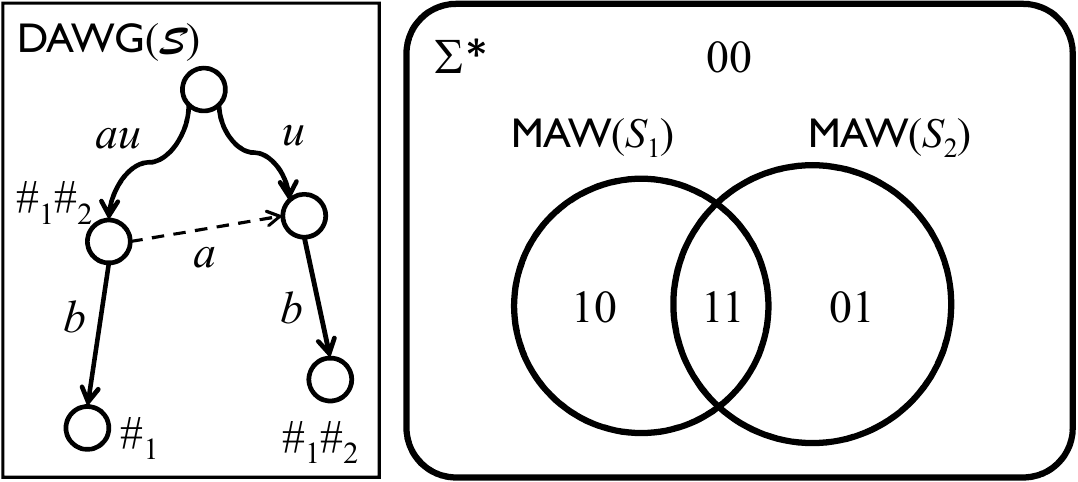}
    }
  }
  \caption{Left: All possible cases of the labels of the nodes $au$, $aub$, and $ub$, and their corresponding bit vectors $\B$. ``absent'' refers to the case where there is no out-going edge labeled $b$ from node $au$. The cells with ``-'' refer to impossible combinations of node labels. Middle: Illustration for $\DAWG(\mathcal{S})$ which shows the case where $au$ is labeled $\#_1\#_2$, $aub$ is labeled $\#_1$, and $ub$ is labeled $\#_1\#_2$. In this case $aub$ is a MAW in $\MAW(\mathcal{S}_{\B})$ with $\B = 01$ (see the left table). Right: The regions corresponding to the bit vectors $\B \in \{00, 01, 10, 11\}$.}
  \label{fig:MAW_cases}
\end{figure}

When $\B = 00$, then since $\MAW(\mathcal{S}_{00}) = \Sigma^* \setminus (\MAW(S_1) \cup \MAW(S_2))$, there are no MAWs to output.
In what follows, we describe our solutions to the cases with $\B \in \{10, 11\}$.
We remark that the case with $\B = 01$ is symmetric to the case with $\B = 10$.

\subsection{When $\B = 10$}
There are four cases in which we output $aub$ as a MAW for $\MAW(\mathcal{S}_{10})$ (see the table on the left of Fig.~\ref{fig:MAW_cases}):

\begin{enumerate}
  \item[(1)] $\Label(au) = \#_1\#_2$, $\Label(aub) = \#_2$, and $\Label(ub) = \#_1\#_2$;
  \item[(2)] $\Label(au) = \#_1\#_2$, $aub \in \AW(\mathcal{S})$, and $\Label(ub) = \#_1$;
  \item[(3)] $\Label(au) = \#_1$, $aub \in \AW(\mathcal{S})$, and $\Label(ub) = \#_1$;
  \item[(4)] $\Label(au) = \#_1$, $aub \in \AW(\mathcal{S})$, and $\Label(ub) = \#_1\#_2$.
\end{enumerate}

\subsubsection{When $\Label(au) = \#_1 \#_2$.} We create skip links that  simultaneously manage Cases (1) and (2), both having $\Label(au) = \#_1 \#_2$.
We create a selected list
$\Schar(u)$ of out-going edge labels of node $u$ such that
$\Schar(u) = \{b \mid \Label(ub) = \#_1\}$,
where the elements are lexicographically sorted.
Let $\Char(au)$ be the sorted list of all out-going edge labels of node $au$.
For any list $L$ of characters and any character $c \in \Sigma$,
let $\Succ(c, L)$ denote the lexicographical successor of $c$ in $L$.
Our algorithm for $\B = 10$ and $\Label(au) = \#_1\#_2$ is described in Algorithm~\ref{algo:B10_12}.

\begin{algorithm2e}[h]\label{algo:B10_12}
  \caption{Algorithm for $\B = 10$ and $\Label(au) = \#_1\#_2$}
  \KwIn{A node $au$ of $\DAWG(\mathcal{S})$ such that $\Label(au) = \#_1 \#_2$, $\B = 10$.}
  \KwOut{A subset $M$ of MAWs $aub$ with $b \in \Sigma$.}

  $M \leftarrow \emptyset$\;
  $U \leftarrow \Char(au) \cup \{\$_U\}$ \tcc*{$\$_U$ is lex. largest in $U$}
  $L \leftarrow \Schar(u) \cup \{\$_L\}$ \tcc*{$\$_L$ is lex. largest in $L$ and $\$_L \prec \$_u$} \label{line:schar}

  $\hat{b} \leftarrow U[1]$; $b \leftarrow L[1]$  \tcc*{start with lex. smallest characters}

  \While{$b \neq \$_L$}{
    \uIf{$\hat{b} = b$ \label{line:comparison1}}{
      \If{$\Label(aub) = \#_2$ \textbf{and} $\Label(ub) = \#_1\#_2$ \label{line:label_conditions}}{
        $M \leftarrow M \cup \{aub\}$ \tcc*{output $aub$} \label{line:output1}
      }
      $\hat{b} \leftarrow \Succ(\hat{b}, U)$ \tcc*{move to the next character in $U$}
      $b \leftarrow \Succ(b, L)$ \tcc*{move to the next character in $L$}
    }
    \uElseIf{$\hat{b} \succ b$ \label{line:comparison2}}{
      $M \leftarrow M \cup \{aub\}$ \tcc*{output $aub$} \label{line:output2}
      $b \leftarrow \Succ(b, L)$ \tcc*{move to the next character in $L$}
    }
  }

  \Return $M$\;
\end{algorithm2e}

\begin{figure}[tb]
  \centerline{
    \begin{tabular}{|c|c|c|}
      \cline{3-3}
      \multicolumn{2}{c|}{}                          & $au$                                                         \\
      \cline{3-3}
      \multicolumn{2}{c|}{}                          & $\#_1\#_2$                                                   \\
      \hline
      $aub$                                          & $ub$                              & $\B$                     \\
      \hline
      \cellcolor[gray]{0.85}$\#_1\#_2$               & \cellcolor[gray]{0.85}$\#_1\#_2$  & \cellcolor[gray]{0.85}00 \\
      \hline
      \cellcolor[gray]{0.85}{}                       & \cellcolor[gray]{0.85}$\#_1$      & \cellcolor[gray]{0.85}00 \\
      \multirow{-2}{*}{\cellcolor[gray]{0.85}$\#_1$} & \cellcolor[gray]{0.85}$\#_1 \#_2$ & \cellcolor[gray]{0.85}01 \\
      \hline
      \multirow{2}{*}{$\#_2$}                        & \cellcolor[gray]{0.85}$\#_2$      & \cellcolor[gray]{0.85}00 \\
                                                     & $\#_1 \#_2$                       & 10                       \\
      \hline
      \multirow{3}{*}{absent}                        & $\#_1$                            & 10                       \\
                                                     & \cellcolor[gray]{0.85}$\#_2$      & \cellcolor[gray]{0.85}01 \\
                                                     & \cellcolor[gray]{0.85}$\#_1 \#_2$ & \cellcolor[gray]{0.85}11 \\
      \hline
    \end{tabular}
    \hfill
    \raisebox{-22mm}{
      \includegraphics[scale=0.58]{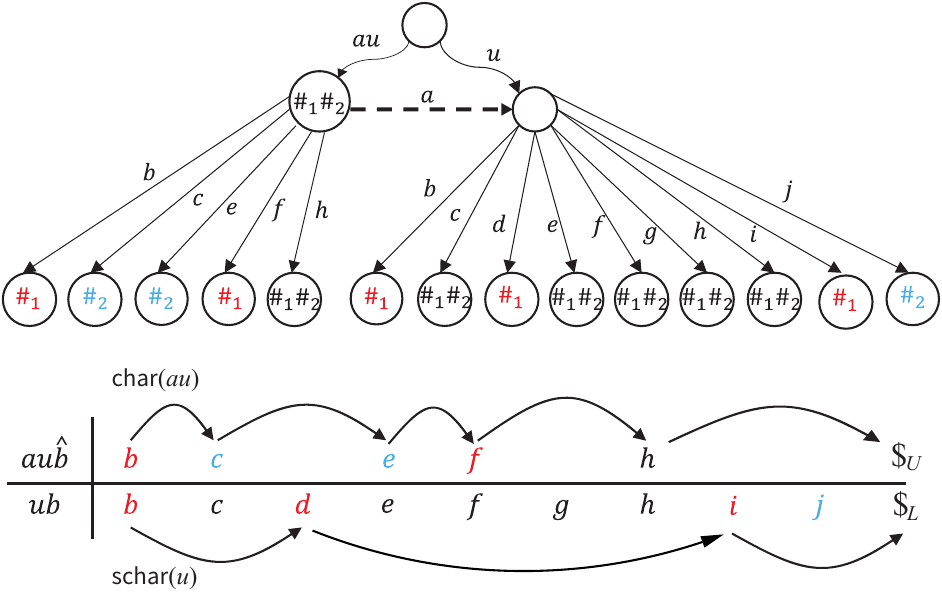}
    }
  }
  \caption{Illustration for our algorithm for $\B = 10$ and $\Label(au) = \#_1 \#_2$. The white cells in the table show the cases where we output elements of $\MAW(\mathcal{S}_{10})$. We compare the labels of the selected out-going edges of node $au$ and $u$ which are connected by the skip links, in sorted order. In this diagram, $aud$ and $aui$ are output in line~\ref{line:output2} and $auc$ and $aue$ are output in line~\ref{line:output2} of Algorithm~\ref{algo:B10_12} as elements of $\MAW(\mathcal{S}_{10})$.}
  \label{fig:MAW10_12}
\end{figure}


\subsubsection{When $\Label(au) = \#_1$.} We create skip links that simultaneously manage Cases (3) and (4), both having $\Label(au) = \#_1$.
We create another selected list
$\Schar'(u)$ of out-going edge labels of node $u$ such that
$\Schar'(u) = \{b \mid \Label(ub) \in \{\#_1, \#_1\#_2\}\}$,
where the elements are lexicographically sorted.
We use the same $\Char(au)$ in the previous case.
Our algorithm for $\B = 10$ and $\Label(au) = \#_1$ is described in Algorithm~\ref{algo:B10_1}.

\begin{algorithm2e}[h]\label{algo:B10_1}
  \caption{Algorithm for $\B = 10$ and $\Label(au) = \#_1$}
  \KwIn{A node $au$ of $\DAWG(\mathcal{S})$ such that $\Label(au) = \#_1$, $\B = 10$.}
  \KwOut{A subset $M$ of MAWs $aub$ with $b \in \Sigma$.}

  $M \leftarrow \emptyset$\;
  $U \leftarrow \Char(au) \cup \{\$_U\}$ \tcc*{$\$_U$ is lex. largest in $U$}
  $L \leftarrow \Schar'(u) \cup \{\$_L\}$ \tcc*{$\$_L$ is lex. largest in $L$ and $\$_L \prec \$_u$}

  $\hat{b} \leftarrow U[1]$; $b \leftarrow L[1]$ \tcc*{start with lex. smallest characters}

  \While{$b \neq \$_L$}{
    \uIf{$\hat{b} = b$}{
      $\hat{b} \leftarrow \Succ(\hat{b}, U)$ \tcc*{move to the next character in $U$}
      $b \leftarrow \Succ(b, L)$ \tcc*{move to the next character in $L$}
    }
    \uElseIf{$\hat{b} \succ b$}{
      $M \leftarrow M \cup \{aub\}$ \tcc*{output $aub$}
      $b \leftarrow \Succ(b, L)$ \tcc*{move to the next character in $L$}
    }
  }
  \Return $M$\;
\end{algorithm2e}

\begin{lemma}[Linear-time MAW computation for $\B = 10$]
  Given $\B = 10$,
  one can compute
  $\MAW(\mathcal{S}_{10})$ in $O(n+|\MAW(\mathcal{S}_{10})|)$ time
  and $O(n)$ working space for integer alphabets of polynomial size in $n = \Vert \mathcal{S} \Vert$.
\end{lemma}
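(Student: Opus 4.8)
The plan is to prove three things in turn: (i) a purely combinatorial characterization of $\MAW(\mathcal{S}_{10})$ in terms of the labels of the nodes $au$, $ub$, and of $aub$ (or its absence); (ii) that Algorithm~\ref{algo:B10_12} and Algorithm~\ref{algo:B10_1} together enumerate exactly the $(au,b)$ pairs meeting that characterization; and (iii) that the whole computation fits in $O(n + |\MAW(\mathcal{S}_{10})|)$ time and $O(n)$ space.

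First, for the characterization I would start from the classical single-string fact that $aub \in \MAW(S_i)$ iff $aub \notin \Substr(S_i)$ while $au, ub \in \Substr(S_i)$, and unfold $w \in \MAW(\mathcal{S}_{10}) \iff w \in \MAW(S_1) \wedge w \notin \MAW(S_2)$. Each membership condition translates into a node label via $au \in \Substr(S_i) \iff \#_i \in \Label(au)$ (and likewise for $ub$ and for $aub$, where $aub \in \AW(\mathcal{S})$ means $au$ has no out-edge labeled $b$). A case split on $\Label(au) \in \{\#_1\#_2, \#_1, \#_2\}$ then reproduces exactly the four output rows (1)--(4) of Fig.~\ref{fig:MAW_cases} and shows that $\Label(au) = \#_2$ can never yield a MAW for $\mathcal{S}_{10}$, since $\#_1 \in \Label(au)$ is necessary. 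This reduces the lemma to correctly listing, for every node $au$ with $\#_1 \in \Label(au)$, the characters $b$ in Cases (1)--(2) when $\Label(au) = \#_1\#_2$ and in Cases (3)--(4) when $\Label(au) = \#_1$.

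Second, for correctness of the merges I would isolate two structural facts: (a) every out-edge $b$ of $au$ is also an out-edge of $u$ (because $aub \in \Substr(\mathcal{S}) \Rightarrow ub \in \Substr(\mathcal{S})$), so $\Char(au) \subseteq \Char(u)$; and (b) if $\Label(au) = \#_1$ then any out-edge $b$ of $au$ has $au b \in \Substr(S_1)$, hence $\#_1 \in \Label(ub)$, giving $\Char(au) \subseteq \Schar'(u)$. Fact (b) makes the merge in Algorithm~\ref{algo:B10_1} exhibit only the cases $\hat b = b$ (a matched out-edge of $au$, i.e.\ $aub \in \Substr(\mathcal{S})$, no output) and $\hat b \succ b$ (a character with $\#_1 \in \Label(ub)$ but $aub \in \AW(\mathcal{S})$, which is exactly Cases (3)--(4)); so the two shown branches suffice. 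For Algorithm~\ref{algo:B10_12} I would verify that $\hat b \succ b$ yields precisely Case (2) and that Case (1) occurs at out-edges of $au$ with $\Label(aub) = \#_2$ and $\Label(ub) = \#_1\#_2$, caught by the label test. The essential point is that the selected list $\Schar(u) = \{b : \Label(ub) = \#_1\}$ deliberately omits the $u$-edges with $\Label(ub) \in \{\#_2, \#_1\#_2\}$ that $au$ lacks, because these correspond to rows with $\B \in \{01, 11\}$ rather than $10$; checking that the skip-link lists exclude exactly the non-$10$ rows is the crux of correctness.

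Finally, for the complexity I would argue an amortized charging scheme, which I expect to be the main obstacle. Preprocessing---building the edge-sorted $\DAWG(\mathcal{S})$ via Theorem~\ref{theo:dawg_construction}, computing all labels, and filtering each sorted edge list into $\Schar(u)$, $\Schar'(u)$ with their skip links---is $O(n)$ time and space, since $\sum_u |\Char(u)|$ equals the number of DAWG edges, which is $O(n)$, and filtering preserves the sorted order. In the main loop, each merge step is charged either to a matched out-edge $b$ of $au$ (the $\hat b = b$ case) or to an output MAW $aub$. The role of the skip links is precisely to eliminate the only comparisons chargeable to neither---e.g.\ $u$-edges with $\Label(ub) = \#_2$ that $au$ does not have---which a naive merge of $\Char(au)$ against all of $\Char(u)$ would incur. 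Since $\sum_{au} |\Char(au)| = O(n)$ and every output is charged once, the total is $O(n + |\MAW(\mathcal{S}_{10})|)$ time; all auxiliary structures occupy $O(n)$ space, and every bound holds for linearly-sortable (integer, polynomially-bounded) alphabets through Theorem~\ref{theo:dawg_construction}.
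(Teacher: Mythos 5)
Your proposal takes essentially the same approach as the paper's proof: build the edge-sorted $\DAWG(\mathcal{S})$ via Theorem~\ref{theo:dawg_construction}, compute the sorted lists $\Char(au)$, $\Schar(u)$, $\Schar'(u)$ in $O(n)$ total time, run Algorithms~\ref{algo:B10_12} and~\ref{algo:B10_1} at every node with correctness resting on the case table of Fig.~\ref{fig:MAW_cases}, and charge each comparison either to an out-going edge of $au$ or to a reported MAW. Your extra ingredients---deriving the four output cases from the single-string MAW characterization, and the containment facts $\Char(au)\subseteq\Char(u)$ and (when $\Label(au)=\#_1$) $\Char(au)\subseteq\Schar'(u)$ that justify the two-branch merges---only make explicit what the paper leaves as ``immediate from the tables,'' so the two arguments coincide.
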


\begin{proof}
  We run Algorithm~\ref{algo:B10_12} and Algorithm~\ref{algo:B10_1} for every node $au$ of $\DAWG(\mathcal{S})$.

  In the preprocessing phase, we build the edge-sorted $\DAWG(\mathcal{S})$ in $O(\Vert \mathcal{S} \Vert)$ time and space by Theorem~\ref{theo:dawg_construction}.
  Since the out-going edges of every node are sorted,
  we can easily compute the sorted lists $\Char(au)$, $\Schar(u)$, $\Schar'(u)$, and $\Schar''(u)$ for all nodes in $O(n)$ total time.

  Let us consider the complexity of the scanning phase of Algorithm~\ref{algo:B10_12}.
  Each edge-label comparison that falls into ``$\hat{b} = b$'' in
  line~\ref{line:comparison1} of Algorithm~\ref{algo:B10_12} is associated either to the reported MAW $aub$ if $\Label(aub) = \#_2$ and $\Label(ub) = \#_1 \#_2$ (in line~\ref{line:label_conditions} and line~\ref{line:output1}),
  or to the out-going edge of node $au$ labeled $b$ otherwise.
  Each edge-label comparison that falls into ``$\hat{b} \succ b$'' in line~\ref{line:comparison2} is associated to the reported MAW $aub$ in line~\ref{line:output2}.
  This ensures the desired time complexity for Algorithm~\ref{algo:B10_12}.
  The complexity for Algorithm~\ref{algo:B10_1} is similar to show.

  The correctness of Algorithm~\ref{algo:B10_12} and Algorithm~\ref{algo:B10_1} is immediate
  from the tables in Fig.~\ref{fig:MAW_cases} and~\ref{fig:MAW10_12}.
  %
  \qed
\end{proof}


\subsection{When $\B = 11$}
There is a single case in which we output $aub$ as a MAW for $\MAW(\mathcal{S}_{11})$ (see Fig.~\ref{fig:MAW_cases}): $\Label(au) = \#_1\#_2$, $aub \in \AW(\mathcal{S})$, and $\Label(ub) = \#_1\#_2$.

Unwanted comparisons can occur here
if $aub \in \AW(\mathcal{S})$, and $\Label(ub) = \#_1$ or $\Label(ub) = \#_2$.
To avoid such comparisons,
we consider another carefully selected list
$\Schar''(u)$ of out-going edge labels of node $u$ such that
$\Schar''(u) = \{b \mid \Label(ub) = \#_1\#_2\}$,
where the elements are lexicographically sorted.
We can use the same $\Char(au)$ in the previous subsection.

We can modify Algorithm~\ref{algo:B10_1} for $\B = 01$ with $\Label(au) = \#_1$
so that the modified algorithm computes MAWs for $\B = 11$,
only by using $\Schar''(u)$ in place of $\Schar'(u)$.
This leads us to the following lemma:

\begin{lemma}[Linear-time MAW computation for $\B = 11$]
  Given $\B = 11$,
  one can compute
  $\MAW(\mathcal{S}_{11})$ in $O(n+|\MAW(\mathcal{S}_{11})|)$ time
  and $O(n)$ working space for integer alphabets of polynomial size in $n = \Vert \mathcal{S} \Vert$.
\end{lemma}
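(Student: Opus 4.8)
The plan is to reduce the $\B = 11$ case to the already-established machinery for $\B = 10$ with $\Label(au) = \#_1$, changing only the selected list of characters that drives the scan. First I would observe from the table in Fig.~\ref{fig:MAW_cases} that $\MAW(\mathcal{S}_{11})$ is generated by exactly one combination of node labels: $\Label(au) = \#_1\#_2$, $aub \in \AW(\mathcal{S})$, and $\Label(ub) = \#_1\#_2$. In particular, the source node $au$ of every reported MAW must be labeled $\#_1\#_2$, and the reported string $aub$ must be \emph{absent} (so $au$ has no out-going edge labeled $b$), while $ub$ must be labeled $\#_1\#_2$ (so $u$ \emph{does} have an out-going edge labeled $b$). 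Thus the reporting condition is precisely ``$b$ is an out-going edge label of $u$ with $\Label(ub)=\#_1\#_2$, but $b$ is not an out-going edge label of $au$,'' which is exactly the ``$\hat b \succ b$'' branch of the scan when the driving list is $\Schar''(u) = \{b \mid \Label(ub) = \#_1\#_2\}$.

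Next I would argue that substituting $\Schar''(u)$ for $\Schar'(u)$ in Algorithm~\ref{algo:B10_1} yields a correct algorithm for $\B = 11$. The skeleton of Algorithm~\ref{algo:B10_1} walks $\Char(au)$ and the selected list $L$ in lexicographic lock-step: whenever a character $b$ appears in $L$ but not in $\Char(au)$ (the ``$\hat b \succ b$'' case), it reports $aub$; whenever $b$ appears in both, it merely advances both pointers and charges the cost to the matched out-going edge of $au$. With $L = \Schar''(u)$, every $b$ in $L$ satisfies $\Label(ub) = \#_1\#_2$, so a report in the ``$\hat b \succ b$'' branch corresponds exactly to an absent $aub$ with $\Label(ub) = \#_1\#_2$ — which, since $\Label(au)=\#_1\#_2$ guarantees $au$ occurs in both strings, is precisely a MAW for $\mathcal{S}_{11}$. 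Conversely, any $b$ with $\Label(ub)\in\{\#_1,\#_2\}$ is never placed in $\Schar''(u)$, so the unwanted comparisons flagged in the text (where $aub\in\AW(\mathcal{S})$ but $\Label(ub)\neq\#_1\#_2$) are never performed, which is exactly what is needed to keep the running time charged correctly.

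For the complexity, I would reuse the preprocessing and amortization argument verbatim from the $\B = 10$ lemma. The edge-sorted $\DAWG(\mathcal{S})$ is built in $O(n)$ time and space by Theorem~\ref{theo:dawg_construction}, and because the out-going edges of every node are sorted, all the selected lists $\Char(au)$ and $\Schar''(u)$ over all nodes can be extracted in $O(n)$ total time. In the scanning phase, each ``$\hat b = b$'' comparison is charged to the matched out-going edge of $au$ (of which there are $O(n)$ in total across the whole DAWG), and each ``$\hat b \succ b$'' comparison is charged to the reported MAW in $\MAW(\mathcal{S}_{11})$. Summing over all nodes $au$ gives $O(n + |\MAW(\mathcal{S}_{11})|)$ time and $O(n)$ working space, completing the proof.

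The step I expect to require the most care is verifying that $\Schar''(u)$ together with the ``$\hat b \succ b$'' branch captures \emph{all and only} the MAWs for $\mathcal{S}_{11}$ — that is, confirming against Fig.~\ref{fig:MAW_cases} that no other row of the table contributes to $\B = 11$ and that no MAW is missed by restricting attention to nodes $au$ with $\Label(au) = \#_1\#_2$. The remaining subtlety is ensuring the amortization still goes through after the list substitution; since $\Schar''(u)$ is a sublist of the out-going labels of $u$ and the charging scheme only relies on each matched comparison corresponding to a genuine out-going edge of $au$, this transfers without modification from the $\B = 10$ analysis.
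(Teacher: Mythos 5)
Your proposal is correct and follows essentially the same route as the paper: identify from the table in Fig.~\ref{fig:MAW_cases} that $\B = 11$ arises only when $\Label(au) = \#_1\#_2$, $aub \in \AW(\mathcal{S})$, and $\Label(ub) = \#_1\#_2$, then rerun Algorithm~\ref{algo:B10_1} with $\Schar''(u) = \{b \mid \Label(ub) = \#_1\#_2\}$ in place of $\Schar'(u)$, reusing the same charging argument (matched comparisons to edges of $au$, mismatches to reported MAWs) for the $O(n + |\MAW(\mathcal{S}_{11})|)$ bound. In fact your write-up is more explicit than the paper's, which states the substitution of $\Schar''(u)$ and leaves the correctness and complexity analysis implicit by reference to the $\B = 10$ lemma.
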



\subsection{Our main result for $k = 2$}

Finally we obtain the main result for a case of two strings with $k = 2$.
\begin{theorem}[Linear-time MAW computation for a set of two strings]
  \label{theo:two_strings}
  Given a set $\mathcal{S} = \{S_1, S_2\}$ of two strings of total length $n$
  and a bit vector $\B \in \{01, 10, 11\}$,
  one can compute
  $\MAW(\mathcal{S}_{\B})$ in $O(n+|\MAW(\mathcal{S}_{\B})|)$ time
  and $O(n)$ working space for integer alphabets of polynomial size in $n$.
\end{theorem}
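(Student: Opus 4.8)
The plan is to assemble Theorem~\ref{theo:two_strings} directly from the two lemmas already proved, after reducing the four possible bit vectors to the two treated cases. First I would note that $\B = 00$ contributes nothing, since $\MAW(\mathcal{S}_{00}) = \Sigma^\ast \setminus (\MAW(S_1) \cup \MAW(S_2))$ contains no MAWs to output, so only $\B \in \{01, 10, 11\}$ are relevant. The case $\B = 10$ is covered by the Linear-time MAW computation lemma for $\B = 10$ (running Algorithm~\ref{algo:B10_12} and Algorithm~\ref{algo:B10_1} over all nodes $au$), and $\B = 11$ is covered by the corresponding lemma for $\B = 11$. The remaining case $\B = 01$ is, as already remarked in the text, symmetric to $\B = 10$: one simply exchanges the roles of $S_1$ and $S_2$ (equivalently, swaps $\#_1$ and $\#_2$ throughout), which transforms every case line in the table of Fig.~\ref{fig:MAW_cases} for $\B = 10$ into the corresponding line for $\B = 01$. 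Thus the same two algorithms, with $\#_1$ and $\#_2$ interchanged and the selected lists redefined accordingly, compute $\MAW(\mathcal{S}_{01})$ within the same bounds.

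Next I would account for the preprocessing and space. By Theorem~\ref{theo:dawg_construction} the edge-sorted $\DAWG(\mathcal{S})$ is built in $O(n)$ time and $O(n)$ space. The node labels $\Label(v) \in \{\#_1, \#_2, \#_1\#_2\}$ are precomputed in $O(n)$ time by a single traversal, and since the out-going edges are sorted, all the selected lists $\Char(au)$, $\Schar(u)$, $\Schar'(u)$, and $\Schar''(u)$ are produced in $O(n)$ total time and occupy $O(n)$ total space. This common preprocessing feeds whichever algorithm the query $\B$ selects.

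I would then combine the per-query guarantees. For the chosen $\B \in \{01, 10, 11\}$, the relevant lemma gives an $O(n + |\MAW(\mathcal{S}_{\B})|)$-time, $O(n)$-space scan, where the charging argument assigns every edge-label comparison either to a reported MAW or to an out-going edge of some node $au$. Adding the $O(n)$ preprocessing keeps the total at $O(n + |\MAW(\mathcal{S}_{\B})|)$ time and $O(n)$ working space, and each reported MAW $aub$ is emitted in $O(1)$ space via the tuple encoding $(a, i, j)$ with $w[i..j] = ub$. Since the MAWs are listed without duplication (each node $au$ and edge label $b$ is visited once), this yields exactly $\MAW(\mathcal{S}_{\B})$ within the claimed optimal bounds.

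The step requiring the most care is verifying the symmetry claim for $\B = 01$ rigorously rather than by appeal to intuition: one must check that interchanging $\#_1 \leftrightarrow \#_2$ maps the output cases of $\B = 10$ (Cases (1)--(4)) bijectively onto the output cases of $\B = 01$, so that the transplanted algorithm reports precisely $\MAW(\mathcal{S}_{01})$ and no spurious or missing words arise. This amounts to confirming that the label table in Fig.~\ref{fig:MAW_cases} is invariant under the simultaneous swap of $\#_1 \leftrightarrow \#_2$ and $10 \leftrightarrow 01$ in the $\B$ column, which I expect to hold by direct inspection but which is the one place where the argument is not purely a matter of citing earlier results.
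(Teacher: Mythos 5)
Your proposal is correct and takes essentially the same route as the paper: the theorem there is obtained directly by combining the lemma for $\B = 10$ (Algorithms~\ref{algo:B10_12} and~\ref{algo:B10_1}), the lemma for $\B = 11$ (the $\Schar''$ variant), and the remarked symmetry $\#_1 \leftrightarrow \#_2$ for $\B = 01$, on top of the $O(n)$ preprocessing of Theorem~\ref{theo:dawg_construction}. Your additional care in spelling out the charging argument and verifying the symmetry of the case table is a faithful elaboration of what the paper leaves implicit, not a different argument.
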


The following corollary is immediate from Theorem~\ref{theo:two_strings}.
\begin{corollary}
  \label{coro:two_strings}
  Given a set $\mathcal{S} = \{S_1, S_2\}$ of two strings of total length $n$,
  one can compute $\MAW(S_1) \cap \MAW(S_2)$,
  $\MAW(S_1) \cup \MAW(S_2)$, and $\MAW(S_1) \bigtriangleup \MAW(S_2)$
  in $O(n+|\MAW(S_1) \cap \MAW(S_2)|)$ time,
  $O(n+|\MAW(S_1) \cup \MAW(S_2)|)$ time,
  and $O(n+|\MAW(S_1) \bigtriangleup \MAW(S_2)|)$ time, respectively,
  using $O(n)$ working space, for integer alphabets of polynomial size in $n$.
\end{corollary}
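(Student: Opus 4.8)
The plan is to reduce each of the three target sets to a \emph{disjoint} union of sets of the form $\MAW(\mathcal{S}_{\B})$, each of which can be listed by Theorem~\ref{theo:two_strings}, and then simply concatenate the resulting lists. The starting point is the observation that, by the definition of $\MAW(\mathcal{S}_{\B})$ for $k = 2$, the three sets $\MAW(\mathcal{S}_{11})$, $\MAW(\mathcal{S}_{10})$, and $\MAW(\mathcal{S}_{01})$ classify every string according to its exact membership pattern across the two inputs: $\MAW(\mathcal{S}_{11}) = \MAW(S_1) \cap \MAW(S_2)$, $\MAW(\mathcal{S}_{10}) = \MAW(S_1) \setminus \MAW(S_2)$, and $\MAW(\mathcal{S}_{01}) = \MAW(S_2) \setminus \MAW(S_1)$. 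In particular these three sets are pairwise disjoint, and their union is exactly $\MAW(S_1) \cup \MAW(S_2)$.

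From this partition the three identities follow directly: $\MAW(S_1) \cap \MAW(S_2) = \MAW(\mathcal{S}_{11})$, $\MAW(S_1) \bigtriangleup \MAW(S_2) = \MAW(\mathcal{S}_{10}) \cup \MAW(\mathcal{S}_{01})$, and $\MAW(S_1) \cup \MAW(S_2) = \MAW(\mathcal{S}_{11}) \cup \MAW(\mathcal{S}_{10}) \cup \MAW(\mathcal{S}_{01})$, where every union on the right-hand side is disjoint.

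To compute the intersection I would invoke Theorem~\ref{theo:two_strings} once with $\B = 11$, which lists $\MAW(\mathcal{S}_{11})$ in $O(n + |\MAW(S_1) \cap \MAW(S_2)|)$ time. For the symmetric difference I would run the theorem twice, for $\B = 10$ and $\B = 01$, and output the concatenation of the two lists; since the two sets are disjoint, no element is produced twice, so the total number of outputs equals $|\MAW(S_1) \bigtriangleup \MAW(S_2)|$ and the running time is $O(n + |\MAW(S_1) \bigtriangleup \MAW(S_2)|)$. The union is handled identically using all three bit vectors $\B \in \{01, 10, 11\}$. Throughout, the $\DAWG(\mathcal{S})$ and the auxiliary selected lists are built once and shared across the (at most three) invocations, so the $O(n)$ working-space bound of Theorem~\ref{theo:two_strings} is preserved.

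The only point requiring care — and the reason the output-sensitive bounds hold without any post-processing — is the disjointness of the sets being concatenated. Because each $\MAW(\mathcal{S}_{\B})$ fixes the membership pattern of its elements with respect to $S_1$ and $S_2$, distinct bit vectors yield disjoint sets, so the naive concatenation of lists already produces each target set without duplicates and without any merging, deduplication, or hashing step. Hence the sum of the individual output sizes equals the size of the target set, which is precisely what the claimed time bounds demand; this disjointness check is the substantive (though routine) part of the argument, the rest being a direct appeal to Theorem~\ref{theo:two_strings}.
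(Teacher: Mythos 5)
Your proposal is correct and matches the paper's intent exactly: the paper simply declares the corollary ``immediate'' from Theorem~\ref{theo:two_strings}, and the decomposition you spell out --- $\MAW(\mathcal{S}_{11}) = \MAW(S_1) \cap \MAW(S_2)$, $\MAW(\mathcal{S}_{10}) = \MAW(S_1) \setminus \MAW(S_2)$, $\MAW(\mathcal{S}_{01}) = \MAW(S_2) \setminus \MAW(S_1)$, with disjointness making the concatenated output sizes add up --- is precisely the argument being left implicit (the introduction even notes that the symmetric difference corresponds to running the problem for $\B = 10$ and $\B = 01$). Your additional care about sharing the $\DAWG(\mathcal{S})$ across invocations and about needing no deduplication is a faithful, slightly more explicit rendering of the same route.
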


\section{Algorithm for arbitrary $k > 2$}
\label{sec:general}

In this section, we present our algorithm for computing $\MAW(\mathcal{S}_{\B})$
in case where $\mathcal{S} = \{S_1, \ldots, S_k\}$ contains $k > 2$ strings.

Let $\B \in \{0,1\}^k \setminus \{0^k\}$ be an input bit vector of length $k > 2$.
We redefine the labels of the nodes of $\DAWG(\mathcal{S})$
such that $\Label(v)[i] = 1$ iff $v$ is a substring of $S_i$
for $1 \leq i \leq k$.
Namely, $\Label(v)$ is now also a bit vector of length $k$.

Let $aub \in \Sigma^*$~($a,b \in \Sigma$ and $u \in \Sigma^*$)
be a candidate of an element of $\MAW(\mathcal{S}_{\B})$
as in the previous sections, where the suffix link of node $au$ points to node $u$
and node $u$ has an out-going edge labeled $b$.
Then, it follows from the definition of $\MAW(\mathcal{S}_{\B})$ that
$aub \in \MAW(\mathcal{S}_{\B})$ iff
\begin{itemize}
  \item[(A)] $\Label(aub)[i] = 0$, $\Label(au)[i] = 1$, and $\Label(ub)[i] = 1$  (i.e. $aub \in \MAW(S_i)$),  or
  \item[(A')] $au$ has no out-going edge labeled $b$, $\Label(au)[i] = 1$, and $\Label(ub)[i] = 1$ \\ (i.e. $aub \in \MAW(S_i)$)
\end{itemize}
for all $1 \leq i \leq k$ with $\B[i] = 1$, and
\begin{itemize}
  \item[(B)] $\Label(aub)[i] = 1$ (i.e. $aub \in \Substr(S_i)$), or
  \item[(C)] $\Label(aub)[i] = 0$, and $\Label(au)[i] = 0$ or $\Label(ub)[i] = 0$ (i.e. $aub \in \nonMAW(S_i)$), or
  \item[(C')] $au$ has no out-going edge labeled $b$, and $\Label(au)[i] = 0$ or $\Label(ub)[i] = 0$ (i.e. $aub \in \nonMAW(S_i)$)
\end{itemize}
for all $1 \leq i \leq k$ with $\B[i] = 0$.

For each node $au$ in $\DAWG(\mathcal{S})$ whose suffix link points to node $u$,
we create a united single skip link $\Schar(ub)$ for the children $ub$ of node $u$
such that $b \in \Schar(ub)$ iff $\Label(ub)[i] = 1$ for every $i$ with $\B[i] = 1$.

After the above preprocessing is finished,
we proceed to the scanning phase of our algorithm.
For each node $au$, we scan the skip links $\Char(aub)$ and $\Schar(ub)$
in parallel, analogously to the case with $k = 2$.
Let $\hat{b} \in \Char(aub)$ and $b \in \Schar(ub)$.
Our algorithm compares these characters in sorted order
while keeping the invariant $\hat{b} \succeq b$ as in the case with $k = 2$.

When the comparison falls into the case ``$\hat{b} = b$'',
then we output $aub$ as an element of $\MAW(\mathcal{S}_{\B})$
if Case (A) is satisfied and if Case (B) or Case (C) is satisfied.
When the comparison falls into the case ``$\hat{b} \succ b$'',
then we output $aub$ as an element of $\MAW(\mathcal{S}_{\B})$
if Cases (A') and (C') are both satisfied.

This already gives us an $O(nk)$-time algorithm for computing $\MAW(\mathcal{S}_{\B})$
using $O(n(k+\log n))$ \emph{bits} of working space,
or alternatively $O(n \lceil k / \log n \rceil)$ words of working space
in the word RAM model with machine word size $\omega = \log n$.

We can speed up checking Cases (A), (B), (C) for each node $au$
by using bit masks of size $\omega = \log n$ each stored at nodes $aub$, $au$, and $ub$,
from $O(k)$ time to $O(\lceil k / \log n \rceil)$ time.
For Cases (A') and (C'), it suffices for us to use only the bit masks stored at nodes
$au$ and $ub$, since node $aub$ does not exist in these cases
and we detect this as a result of ``$\hat{b} \succ b$'' comparison.

Overall, we obtain the following:
\begin{theorem}[Efficient MAW computation for a set of $k$ strings]
  Given a set $\mathcal{S} = \{S_1, \ldots, S_k\}$ of $k$ strings of total length $n$ and
  a bit vector $\B \in \{0,1\}^k \setminus \{0^k\}$,  one can compute
  $\MAW(\mathcal{S}_{\B})$ in $O(n\lceil k / \log n \rceil+|\MAW(\mathcal{S}_{\B})|)$ time
  and $O(n(k+\log n))$ bits of working space (or alternatively $O(n\lceil k / \log n \rceil)$ words of working space), for integer alphabets of polynomial size in $n$.
\end{theorem}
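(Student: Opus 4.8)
The plan is to turn the equivalences (A)--(C') stated just above the theorem into an explicit scan of $\DAWG(\mathcal{S})$, and to account for every unit of work either against an edge of the DAWG (which may pay the $O(\lceil k/\log n\rceil)$ word-parallel cost) or against a reported MAW (which pays only $O(1)$). First I would build the edge-sorted $\DAWG(\mathcal{S})$ in $O(n)$ time and space by Theorem~\ref{theo:dawg_construction}, and attach to every node $v$ a length-$k$ bit label $\Label(v)$ packed into $\lceil k/\log n\rceil$ machine words, where $\Label(v)[i]=1$ iff $v\in\Substr(S_i)$. Since $v\in\Substr(S_i)$ exactly when the sink labeled $i$ is reachable from $v$, these labels can be computed in $O(n\lceil k/\log n\rceil)$ time by a single reverse-topological pass, initialising each sink $i$ with the singleton bit $i$ and setting $\Label(v)=\bigvee_{(v,b,v')\in E}\Label(v')$, each bitwise OR costing $O(\lceil k/\log n\rceil)$ word operations. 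This accounts for the $O(nk)$ bits of the space budget, while the DAWG itself needs $O(n\log n)$ bits, giving the claimed $O(n(k+\log n))$ bits in total.

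Next I would set up the skip links as described above: for each node $au$ whose suffix link points to $u$, the selected list $\Schar(ub)$ retains, in sorted order, those children $b$ of $u$ with $\Label(ub)\supseteq\B$ (containment of bit sets); together with the sorted out-edge list $\Char(au)$ this is all the scanning needs, and both families of lists are producible in $O(n\lceil k/\log n\rceil)$ total time from the edge-sorted DAWG. For each node $au$ with $\Label(au)\supseteq\B$ (a preliminary $O(\lceil k/\log n\rceil)$-time test discards wholesale every other node, which cannot generate any MAW for $\mathcal{S}_\B$), I would run the two-finger merge of $\Char(au)$ and $\Schar(ub)$ maintaining the invariant $\hat b\succeq b$, exactly as in the $k=2$ algorithms. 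At a match ``$\hat b=b$'' the node $aub$ exists and I test Cases (A) and (B)-or-(C) through the single word-parallel predicate $\overline{\Label(aub)}\wedge\Label(au)\wedge\Label(ub)=\B$ in $O(\lceil k/\log n\rceil)$ time, reporting $aub$ as a triple $(a,i,j)$ precisely when it holds. At a strict step ``$\hat b\succ b$'' the node $aub$ is absent, so $\Label(aub)=0^k$ and the same predicate collapses to $\Label(au)\wedge\Label(ub)=\B$, which I test analogously for Cases (A') and (C').

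Correctness is then immediate from (A)--(C') together with the observation that $\overline{\Label(aub)}\wedge\Label(au)\wedge\Label(ub)$, read as a bit set, equals $\{\,i : aub\in\MAW(S_i)\,\}$, so the predicate holds iff this set equals $\B$, i.e. iff $aub\in\MAW(\mathcal{S}_\B)$; since $\Label(ub)\supseteq\B$ is necessary for the predicate and is exactly the membership condition of $\Schar(ub)$, no candidate is skipped. For the running time I would charge each ``$\hat b=b$'' comparison and each advance of the $\hat b$ pointer to the corresponding out-edge of $au$, of which there are $O(n)$ in total, each paying $O(\lceil k/\log n\rceil)$, and I would charge each successful report to its MAW at $O(1)$.

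The step I expect to be the main obstacle is bounding the strict ``$\hat b\succ b$'' comparisons, that is, guaranteeing that the expensive $O(\lceil k/\log n\rceil)$ predicate evaluation is performed only $O(n)$ times apart from the reported MAWs, so that the $|\MAW(\mathcal{S}_\B)|$ term stays free of the $\lceil k/\log n\rceil$ factor. In the $k=2$ proof this is clean because every selected character met in a strict step provably yields an output MAW; for general $\B$, however, an absent extension $aub$ with $\Label(ub)\supseteq\B$ may satisfy $\Label(au)\wedge\Label(ub)\supsetneq\B$ and thus be a MAW for a strictly larger bit vector, producing a comparison chargeable to neither an edge nor a reported MAW. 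Closing this gap requires sharpening the selected list so that its membership already certifies $\Label(au)\wedge\Label(ub)=\B$ (the exact-intersection refinement that specialises, for $k=2$, to the split between $\Schar$ and $\Schar'$ used above), while still building all selected lists within $O(n\lceil k/\log n\rceil)$ time and $O(n(k+\log n))$ bits. Once every strict step is thereby forced to emit a distinct element of $\MAW(\mathcal{S}_\B)$, the charging closes and the total becomes $O(n\lceil k/\log n\rceil+|\MAW(\mathcal{S}_\B)|)$ time within the stated space.
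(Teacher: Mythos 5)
Your construction is, step for step, the algorithm the paper itself describes for general $k$: the same packed bit labels (whose computation the paper leaves implicit; your reverse\mbox{-}topological bitwise OR is the intended way and fits the $O(n\lceil k/\log n\rceil)$ budget), the same shared skip lists with membership condition $\Label(ub)\supseteq\B$, the same two-finger merge keeping $\hat b\succeq b$, and the same tests at the two kinds of steps. Your single predicate $\overline{\Label(aub)}\wedge\Label(au)\wedge\Label(ub)=\B$ is exactly the bit-parallel form of ``(A) and ((B) or (C))'' at a match, and collapses at a strict step to $\Label(au)\wedge\Label(ub)=\B$, which is ``(A') and (C')''; so on correctness of the output set and on the space bound you have faithfully reproduced the paper's argument, with slightly more detail than the paper gives.

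The obstacle you flag at the end is a genuine gap --- and you should know that the paper's own proof does not close it either: it charges every ``$\hat b\succ b$'' comparison to a reported MAW ``analogously to the case $k=2$'', which is valid only if membership in the skip list certifies an output. For $k=2$ it does, because the $k=2$ solution keeps a \emph{separate} exact list per possible value of $\Label(au)$ (three values, hence $\Schar$, $\Schar'$, $\Schar''$); for $k>2$ the single shared list with $\Label(ub)\supseteq\B$ cannot encode Case (C'), which couples $\Label(au)$ and $\Label(ub)$. Quantitatively, after your prefilter $\Label(au)\supseteq\B$, the strict steps are in bijection with the strings of $\bigcap_{i:\B[i]=1}\MAW(S_i)$ that are absent from all of $\mathcal{S}$, and the rejected ones are precisely the MAWs for bit vectors $\B'\supsetneq\B$. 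This can dwarf the claimed bound: for $k=3$, $\B=110$, and $S_i=T\#_i$ with a common $T$, every MAW of $T$ over $\Sigma$ triggers a strict step rejected by (C'), so there can be $\Theta(\sigma|T|)$ of them, while $|\MAW(\mathcal{S}_{110})|=O(1)$ and $n\lceil k/\log n\rceil=O(n)$; each rejection moreover costs $\Theta(\lceil k/\log n\rceil)$ word operations chargeable to nothing. Your proposed repair (lists certifying $\Label(au)\wedge\Label(ub)=\B$ exactly) is the right target, but it requires $au$-dependent lists, and with up to $2^k$ distinct values of $\Label(au)$ neither you nor the paper shows how to build them within $O(n\lceil k/\log n\rceil)$ time and $O(n(k+\log n))$ bits. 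In short: your attempt is incomplete, but exactly as incomplete as the paper's proof; the step you could not close is the step the paper silently skips, and closing it would require a new idea not present in either text.
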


\section{Conclusions and discussions}

In this paper, we introduced the notion of generalized MAWs for a set
$\mathcal{S} = \{S_1, \ldots, S_k\}$ of multiple $k$ strings,
and considered the problem of computing the set $\MAW(\mathcal{S}_{\B})$ of
generalized MAWs specified by a given bit vector $\B \in \{0,1\}^k$.
We presented a linear $O(n + |\MAW(\mathcal{S}_{\B})|)$-time algorithm
for $k = 2$,
and an efficient $O(n \lceil k / \log n \rceil + |\MAW(\mathcal{S}_{\B})|)$-time
algorithm for general $k > 2$.
The latter algorithm also runs in linear $O(n + |\MAW(\mathcal{S}_\B)|)$ time
for $k = O(\log n)$.
An interesting open question is whether there exists an $O(n + |\MAW(\mathcal{S}_\B)|)$-time solution for $k = \omega(\log n)$.

B{\'{e}}al et al.~\cite{BealCM03} considered a different version of MAWs $\MAW'(\mathcal{S})$ for a set $\mathcal{S}$ of $k$ strings, where a string $w = aub$ is a MAW for $\mathcal{S} = \{S_1, \ldots, S_k\}$ if $aub \notin \Substr(\mathcal{S})$, $au \in \Substr(S_i)$ and $ub \in \Substr(S_j)$ for some $1 \leq i, j \leq k$. They gave an $O(\sigma n)$-time and space solution for computing $\MAW'(\mathcal{S})$. This version of MAWs can be computed in optimal $O(n+|\MAW'(\mathcal{S})|)$ time, independently of $k$, by running our algorithm without skip links.
Ayad et al.~\cite{AyadBFHP21} considered the problem of computing the same version of MAWs of length up to $\ell$ with a given threshold $\ell > 0$.

Independently to our work, the recent work by B{\'{e}}al and Crochemore~\cite{BealC23} considered the following problem: Let $\mathcal{T}$ and $\mathcal{R}$ be sets of strings, where $\mathcal{T}$ is called a target and $\mathcal{R}$ is called a reference. A $\mathcal{T}$-specific string with respect to $\mathcal{R}$ is a string $u$ such that $u \in \Substr(\mathcal{T})$, $u \notin \Substr(\mathcal{R})$, $v \in \Substr(\mathcal{R})$ for any proper substring $v$ of $u$. By definition, a string $u$ is a $\mathcal{T}$-specific string with respect to $\mathcal{R}$ if and only if $u \in \MAW(\mathcal{R}) \cap \Substr(\mathcal{T})$.
B{\'{e}}al and Crochemore~\cite{BealC23} showed an algorithm for finding all $\mathcal{T}$-specific strings w.r.t. $\mathcal{R}$ in $O(n \sigma)$-time and $O(n)$ space, where $n$ is the total length of the strings in $\mathcal{T}$ and $\mathcal{R}$, assuming that the edges of the DAWG are represented by transition matrices (Proposition 2, \cite{BealC23}). Their algorithm also uses the DAWG built on $\mathcal{T}$ and $\mathcal{R}$ and marks its nodes in an appropriate way  (Proposition 1, \cite{BealC23}). This marking technique is very similar to our skip links from Section~\ref{sec:skip_link} for the case of $k = 2$, and thus our algorithm can be extended to solve this problem in $O(n)$ time and space for integer alphabets.

\section*{Acknowledgments}
This work was supported by JSPS KAKENHI Grant Numbers
JP23H04381~(TM),
JP21K17705, JP23H04386~(YN),
JP22H03551~(SI), JP20H04141~(HB).

\bibliographystyle{abbrv}
\bibliography{ref}

\end{document}